\titleformat*{\section}{\sc\centering\large} 
\titleformat*{\subsection}{\bf} 
\titleformat*{\subsubsection}{\it} 
\newcommand{\be}{\begin{equation}}
\newcommand{\ee}{\end{equation}}
\newcommand{\vs}{\vspace{0.2cm}}
\newtheorem{Theorem}{Theorem}
\newtheorem{Proposition}{Proposition}
\newtheorem{Estimate}{Estimate}
\newcommand{\Sa}{{\rm S}^{1}} 
\begin{document}

\thispagestyle{empty}

\begin{center}
{\Large\bf Static vacuum $3+1$ black holes that cannot be put into stationary rotation}
\vs\vs

{\sc Javier Peraza}

{javier.perazamartiarena@concordia.ca}
\vs

{\it Mathematics and Statistics Department, Concordia University} 

{\it Montreal, Canada}

\vs

{\sc Mart\'in Reiris}

{mreiris@cmat.edu.uy}
\vs

{\it Centro de Matem\'atica, Universidad de la Rep\'ublica} 

{\it Montevideo, Uruguay}

\begin{abstract}
We prove that some of the static Myers/Korotkin-Nicolai (MKN) vacuum 3+1 static black holes cannot be put into stationary rotation. Namely, they cannot be deformed into axisymmetric stationary vacuum black holes with non-zero angular momentum. We also prove that this occurs in particular for those MKN solutions for which the distance along the axis between the two poles of the horizon is sufficiently small compared to the square root of its area. The MKN solutions, sometimes called periodic Schwarzschild, are physically regular, have no struts or singularite, but are asymptotically Kasner.

The static rigidity presented here appears to be the first in the literature of General Relativity. 
\end{abstract}

\end{center}

\section{Introduction}
The static Schwarzschild black holes can be deformed into rotating Kerr black holes. Without this basic property, stability of the Schwarzschild black holes would not have a chance. In higher dimensions, the Tangherlini black holes, which are generalizations of Schwarzschild, can be deformed into the stationary Myers-Perry black holes that can rotate along all possible planes. Here, we study quotients with one horizon of the $3+1$ static vacuum Myers/Korotkin-Nicolai solutions (see later), and show that some of them cannot be deformed into stationary rotating ones. In other words, some of these black holes solutions, that we will simply call MKN, cannot be put into stationary rotation. What is remarkable here is the fact that except for axisymmetry (that allows us to define angular momentum) and a certain metric completeness at spatial infinity, no other assumptions are made on the deformations, in particular on their asymptotics. These MKN static black holes are genuinely isolated from stationary rotating ones. They can be deformed of course into other MKN static solutions, but none of them rotate. It is important to stress, that these quotients are physical regular solutions, free of singularities outside the horizons and complete at infinity. They are regular solutions but not asymptotically flat.

This curious result, which seems to be the first of its kind in the literature, further supports the notion that the phenomenology of General Relativity in the 3+1 asymptotically flat context is indeed quite exceptional. Thinking from this perspective, it helps to highlight the contrast between the classical 3+1-asymptotically flat scenario and other scenarios, like higher dimensions and different topologies and asymptotics. 
\vs

\subsection{Static Periodic Black Holes}

The static vacuum solutions in question in this article are quotients of the periodic solutions found first by Myers in 1987 \cite{Myers:1987qx} and rediscovered by Korotkin and Nicolai in 1994 \cite{KN}. These quotient spacetimes, that will be described more in detail below, have a single spherical horizon and their spatial topology is that of an open solid torus ($\Sa\times D^{2}$, where $D^{2}$ is an open disc) minus an open 3-ball (whose boundary is the horizon). A schematic picture of the topology is given in Figure \ref{Fig1}. They are asymptotically Kasner (see (\ref{KASA})) rather than asymptotically flat and their spatial hypersurfaces ($t=$ const) are metrically complete. As we will see, both the topology and the asymptotic (which in turn can be proved to be a consequence of the topology), are crucial to obtain the mentioned static rigidity.   
\begin{figure}[h]
\centering
\includegraphics[width=6cm, height=4cm]{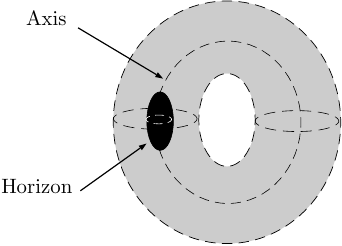}
\caption{The grey region is the spatial manifold of a MKN solution. It is topologically an open solid 3-torus minus a 3-ball (the black ball). The boundary of the black 3-ball is the horizon. The border of the solid torus, marked with a dashed line, is `infinity' and of course lies at an infinite metric distance from the horizon. The axis of the $\Sa$-symmetry is marked with the dashed line in the middle of the solid torus.}
\label{Fig1}
\end{figure}

Let's present now a brief account on the solutions found by Myers and Korotkin-Nicolai that we will quotient. Roughly speaking, they represent a configuration of infinite horizons periodically aligned on a single axis (see Figure \ref{Fig2}). These spacetimes are simply connected. We will call them `universal MKN solutions' to distinguish them from their quotients that we called MKN.
\begin{figure}[h]
\centering
\includegraphics[width=2cm, height=8cm]{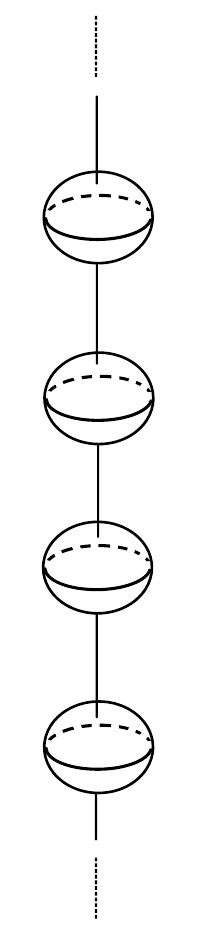}
\caption{A representation of the the infinite periodic array of horizons of the universal MKN solutions.}
\label{Fig2}
\end{figure}
In Weyl-Papapetrou coordinates the metrics of the universal MKN solutions take the form,
\be\label{MKNM}
{\mathfrak g}=-e^{\omega}dt^{2}+e^{-\omega}(e^{2k}(dz^{2}+d\rho^{2})+\rho^{2}d\varphi^{2}).
\ee
Here the axial coordinate is $\varphi$, $(z,\rho)\in \mathbb{R}_{z}\times \mathbb{R}^{+}_{\rho}$ and the exponents $\omega$ and $k$ depend only on $(z,\rho)$, (the form (\ref{MKNM}) is the one used in \cite{KN} but later we will use the form given in \cite{Wei92}, which is equivalent). The metric depends explicitly on two parameters, the `period' $L>0$ and the `horizon semi-length' $m>0$, with $L/2>m$ to avoid overlapping of horizons (that would result in non regular solutions). Note therefore that $4m/L$ can take any value between $0$ and $2$. This will be used later when we discuss a heuristic argument behind Theorem \ref{MAIN1}. 
The horizons are located on each of the closed segments of the set $\Gamma=\{(z,0): -m+jL\leq z\leq m+jL, j\in \mathbb{Z}\}$, (Weyl-Papapetrou's coordinates degenerate on the horizons). The exponents $\omega=\omega_{\rm MKN}$ and $k=k_{\rm MKN}$ are suitably singular on them. Furthermore, as $\rho\rightarrow \infty$ the metric becomes asymptotically Kasner, namely, it takes the asymptotic form,
\be\label{KASA}
{\mathfrak g}\sim -c^{2}\rho^{\alpha}dt^{2}+a^{2}\rho^{\alpha^{2}/2-\alpha}(dz^{2}+d\rho^{2})+b^{2}\rho^{2-\alpha}d\varphi^{2},
\ee
where (it can be seen) $\alpha=4m/L$ and therefore $0<\alpha<2$, and where $a, b$ and $c$ are some positive constants. The parameter $\alpha$ is known as the \emph{Kasner exponent} associated to the solution. 

The universal MKN solutions are constructed briefly as follows. For metrics of the form (\ref{MKNM}), the vacuum Einstein equations reduce to the following linear equation for $\omega$,
\be\label{LAPSEE}
\omega_{zz}+\omega_{\rho\rho}+\frac{1}{\rho}\omega_{\rho}=0,
\ee
and the following quadratures for $k$,
\be\label{CONSTRE}
k_{\rho}=\frac{\rho}{4}(\omega_{\rho}^{2}-\omega_{z}^{2}),\quad k_{z}=\frac{\rho}{2}\omega_{z}\omega_{\rho}.
\ee
Equations (\ref{LAPSEE}) and (\ref{CONSTRE}) are essentially the lapse and the constraint equations respectively. For the Schwarzschild metric of mass $m$, $\omega$ takes the form,
\be\label{omega_Sch}
\omega=\omega_{S}:=\ln\left(\frac{\sqrt{(z-m)^{2}+\rho^{2}}+\sqrt{(z+m)^{2}+\rho^{2}}-2m}{\sqrt{(z-m)^{2}+\rho^{2}}+\sqrt{(z+m)^{2}+\rho^{2}}+2m}\right),
\ee
which is singular on the segment $\{(z,0):-m\leq z\leq m\}$ where the horizon lies. To construct the universal MKN solution of period $L$ and horizon length $m$, one then considers the translations $\omega_{S}(z+jL,\rho)$, with $L/2>m$ and $j\in \mathbb{Z}$, and adds them up with a convenient counter-term to make the series convergent. The result is a periodic $\omega_{\rm MKN}$ with period  $L$. One can then prove that $k_{\rm MKN}$, found from (\ref{CONSTRE}), is also periodic. The constant of integration can then be fixed so that no struts show up in between the horizons. These solutions have no singularity outside the horizon, and are regular everywhere. We refer the reader to \cite{KN} for full details. 

The Weyl-Papapetrou coordinates are determined by the metric up to a free scaling constant. Just for reference, if $T$ and $R$ are the stationary and rotational Killing fields respectively, then $\rho=(-\mathfrak{g}(T,T)\mathfrak{g}(R,R))^{1/2}$ and $z$ is the harmonic conjugate of $\rho$. In contrast to the the asymptotically flat case, there is no natural normalization for $T$, thus both $\rho$ and $z$ get defined up to a free positive factor. Under such scaling factor, both $L$ and $m$ scale the same way so that the quotient $L/m$ is a geometric invariant. This invariant will be central in this article.
\begin{figure}[h]
\centering
\includegraphics[width=10cm, height=7cm]{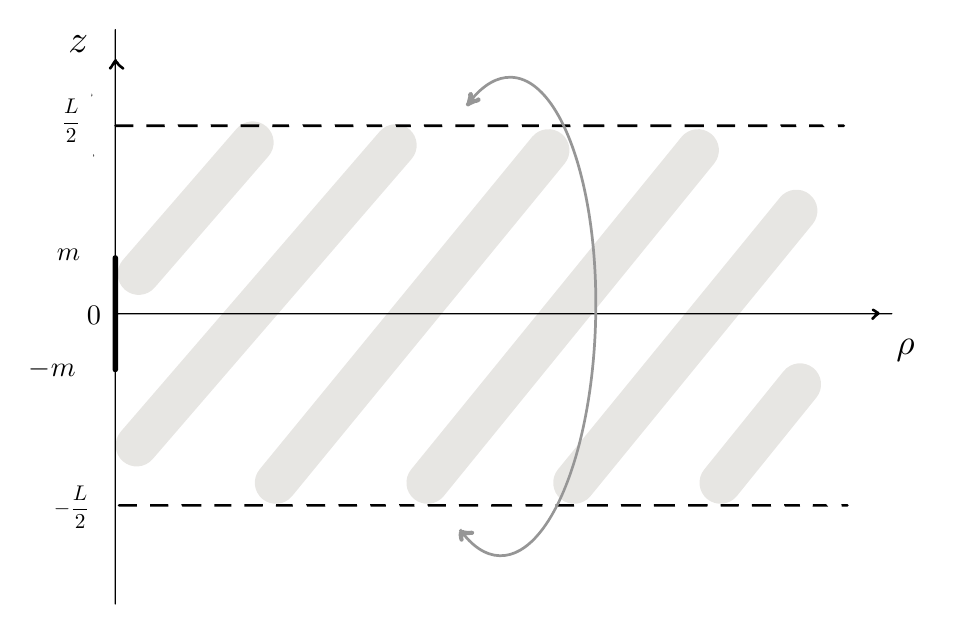}
\caption{The domains $D_{L,m}$.}
\label{Fig3}
\end{figure}

\subsection{Stationary Perturbations}

As said before, we will consider quotients of the universal MKN solutions by a translation, so that the resulting spacetime has only one horizon. The MKN black holes are $\Sa$-symmetric, but being static, have zero (Komar) angular momentum. Observe that, after the quotient, the Weyl-Papapetrou coordinate $z$ ranges between $-L/2$ and $L/2$. The points $(-L/2,\rho)$ and $(L/2,\rho)$ are identified. The quotient manifold $\Sa_{L} \times \mathbb{R}^{+}$ where $\Sa_{L}$ is a circle of length $L$, will be denoted by $S$ in the proof of Theorem \ref{MAIN1}. A representation of these domains, that we will denote as $D_{L,m}$, is given in Figure \ref{Fig3}. Observe that $S$ is a two-manifold with boundary, given by $\partial S = \Sa_L \times \{0\}$.      
\vs

We now ask whether a given MKN black hole spacetime can be put into stationary rotation. Precisely, we ask if one can find a 1-parameter family of stationary and axisymmetric black hole metrics in Weyl-Papapetrou coordinates,
\be\label{MWPC}
\mathfrak{g}_{\lambda} = -V_{\lambda}dt^{2}+2W_{\lambda}d\varphi dt+\eta_{\lambda}d\varphi^{2}+\frac{e^{2\gamma_{\lambda}}}{\eta_{\lambda}}(d\rho^{2}+dz^{2}),
\ee
over a 1-parameter family of domains $\mathcal{D}_{L_{\lambda},m_{\lambda}}$, with $L_{\lambda}$ and $m_{\lambda}$ varying continuously, and with non-zero angular momentum for all $\lambda\neq 0$. We do not require the continuity of $V_{\lambda}$, $W_{\lambda}$, $\eta_{\lambda}$ and $\gamma_{\lambda}$ but just the continuity of $L_{\lambda}/m_{\lambda}$ which is a much weaker condition. Of course $L/m$ encodes significant global geometric information. Taking this into account, for $\lambda$ small, the metrics $\mathfrak{g}_{\lambda}$ and $\mathfrak{g}_{\rm MKN}=\mathfrak{g}_{0}$ could look quite different, but $L_{\lambda}/m_{\lambda}$ must be close to $L_{\rm MKN}/m_{\rm MKN}=L_{0}/m_{0}$. In particular, the asymptotic behaviours as $\rho\rightarrow \infty$ of $V_{\lambda},W_{\lambda}$ and $\gamma_{\lambda}$ could be quite different to those of the given MNK. This is what indeed occurs if we deform the MKN solution along the MKN family itself by taking $\lambda=m-m_{0}$.  Different $\mathfrak{g}_{\rm MNK}$ have different asymptotic behavior. Of course we will always require the spatial hypersurfaces ($t=$ const) to be metrically complete at infinity for all $\lambda$, \footnote{The 3-manifold endowed with the Riemannian distance is metrically complete}. 

We address all this in the following section.

\subsection{Main results}

Coming back to the question raised, as to whether one can put the MKN black holes into stationary rotation, we prove that the answer is `no' at least when $L_{\rm MKN}/m_{\rm MKN}<4$ and provided that the 2-metric $q:=e^{2\gamma}(dz^{2}+d\rho^{2})$ is complete at infinity. The impossibility will follow from the next theorem, that in itself is a non-existence result for axisymmetric rotating black holes.

\begin{Theorem}\label{MAIN1} Assume that,  
\be
\mathfrak{g} = -Vdt^{2}+2Wd\varphi dt+\eta d\varphi^{2}+\frac{e^{2\gamma}}{\eta}(d\rho^{2}+dz^{2}),
\ee
is the metric of a periodic stationary and axisymmetric black-hole with period $L$, horizon length $m$ and non-zero angular momentum. Assume also that the 2-metric $q=e^{2\gamma}(dz^{2}+d\rho^{2})$ is metrically complete at infinity. Then, $L/m\geq 4$.
\end{Theorem}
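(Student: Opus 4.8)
The plan is to reduce the problem to the Weyl--Ernst description on the two--dimensional orbit space, to locate the angular momentum in the monodromy of the twist potential, and then to play this monodromy against the rod lengths and the asymptotically Kasner end forced by completeness. Concretely, I would first pass to the orbit space $S=\{(\rho,z):\rho\geq 0,\ z\in\Sa_{L}\}$ of the commuting isometries generated by $T=\partial_{t}$ and $R=\partial_{\varphi}$: a flat half--cylinder carrying on its edge $\{\rho=0\}$ the horizon rod $\Gamma_{H}$ (Weyl--length $2m$) and the complementary axis rod $\Gamma_{A}$ (Weyl--length $L-2m$), the two poles of the horizon being the corners $\Gamma_{H}\cap\Gamma_{A}$. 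On $S\setminus\{\rho=0\}$ the vacuum equations are equivalent to the axially symmetric harmonic map system for $\Phi=(\eta,v)$ with values in the hyperbolic plane $\mathbb{H}^{2}=\{(\eta,v):\eta>0\}$, $ds^{2}_{\mathbb{H}^{2}}=\eta^{-2}(d\eta^{2}+dv^{2})$, where $v$ is the twist potential of $R$ (closed by Ricci--flatness), together with the quadrature that recovers $\gamma$ from the energy density of $\Phi$. I would record: the Weyl relation $\rho^{2}=V\eta+W^{2}$; the horizon area $\mathcal{A}_{H}=2\pi\int_{\Gamma_{H}}e^{\gamma}\,dz$ and the geodesic length $\ell_{A}=\int_{\Gamma_{A}}\bigl(e^{\gamma}/\eta\bigr)^{1/2}\big|_{\rho=0}\,dz$ of the axis arc between the two poles; and the edge behaviour $\eta\sim c_{A}\rho^{2}$ on $\Gamma_{A}$ (absence of conical singularity) and $\eta\to c_{H}>0$ on $\Gamma_{H}$.

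Next I would pin down where the angular momentum lives. Because the quotient is taken along the $z$--translation, $\{\rho>0\}\times\Sa_{L}$ is not simply connected, and a Komar computation over the torus swept by the $\varphi$--orbits over a non--contractible loop $\{\rho=\rho_{0}\}$ shows that the angular momentum of the (unique) horizon equals, up to a fixed normalisation, the period of $dv$ around that loop. Equivalently, $v$ is genuinely multivalued with monodromy $c\propto J$ and $\Phi$ descends to $S$ only up to the nontrivial horocyclic translation $(\eta,v)\mapsto(\eta,v+c)$ of $\mathbb{H}^{2}$; on the universal cover $\{z\in\mathbb{R}\}$ this is the familiar statement that the two axis rods flanking the horizon rod carry twist--potential values differing by $c$. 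Thus $J\neq0$ is exactly $c\neq0$: $v$ is nonconstant and $W^{2}=\rho^{2}-V\eta$ is nowhere identically $0$.

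Then comes the analytic core. I would first show that metric completeness as $\rho\to\infty$, together with periodicity and the single--rod structure, forces $\mathfrak g$ to be asymptotically Kasner as in (\ref{KASA}); this is precisely where the topology enters, and the same rod balance that gives $\alpha=4m/L$ in the static case fixes $\alpha$ here, so that $L/m\geq4\iff\alpha\leq1$. It then remains to prove that a nonzero twist monodromy is incompatible with $\alpha>1$. I expect to do this by a Pohozaev/scaling identity for $\Phi$ on the truncated half--cylinders $\{\rho<\rho_{0}\}$ as $\rho_{0}\to\infty$ --- or, equivalently, by testing the elliptic equation satisfied by $\mathcal{U}=\ln(\eta/\rho^{2})$, whose source is the sign--definite density $\eta^{-2}|\nabla v|^{2}$, against a weight adapted to the Kasner rates: the boundary integrals over $\Gamma_{H}$ and $\Gamma_{A}$ should contribute the lengths $2m$ and $L-2m$, the end at $\rho=\infty$ a term governed by $\alpha$, and the bulk term the nonremovable flux $c\propto J$ of $\nabla v$, with the inequality saturating exactly at $\alpha=1$, i.e. $L/m=4$, the extreme--Kerr--throat borderline.

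I expect the main obstacle to be concentrated in the last step and to be twofold: (i) deriving the asymptotically Kasner form, \emph{with} the precise exponent $\alpha=4m/L$, from metric completeness alone --- one must exclude other ends and control the decay of the cross--term $W$ so that the Kasner block is not spoiled; and (ii) turning the monodromy--versus--rod--length heuristic into a clean inequality, which hinges on finding a test weight for which the three contributions (the two rod boundary terms, the flux carrying $J$, and the end at infinity) combine with definite signs. A subsidiary point is to verify that the rod corners --- the horizon poles, where $\Phi$ has a mild logarithmic singularity --- produce no boundary contribution; this should follow from the local corner asymptotics of $\Phi$.
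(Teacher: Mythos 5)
Your skeleton coincides with the paper's: the same superharmonic combination of $\ln\eta$ and $\ln\rho$ (your $\mathcal{U}=\ln(\eta/\rho^{2})$ and the paper's $x=\ln(\eta/\rho)$ satisfy the same equation $\Delta_{f}(\cdot)=-|\nabla\omega|^{2}/\eta^{2}$, cf.\ (\ref{xEQ})), the same identification of $J$ with the period of the twist potential around the non-contractible circles $\{\rho=\mathrm{const}\}$, and the same rod-length bookkeeping producing the boundary flux $L-4m$ at $\rho=0$. But the two steps you yourself flag as the ``main obstacles'' are exactly where the proof lives, and as planned they do not close. First, you want to derive the full asymptotically Kasner form with exponent $\alpha=4m/L$ from completeness alone. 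That is a substantial theorem which the paper deliberately circumvents, and the exponent is in any case wrong when $J\neq0$: the flux identity you describe gives $(1-\alpha)L\leq L-4m$, i.e.\ $\alpha\geq 4m/L$ with a strict gap proportional to $\Omega J$ (the Smarr identity $\alpha L/4=m+2\Omega J$), so the equivalence ``$L/m\geq4\iff\alpha\leq1$'' is not available. Second, and more seriously, your Pohozaev/weighted-test-function identity can only reproduce that same \emph{lower} bound on $\alpha$; the bulk term it sees is the energy $\int|\nabla v|^{2}/\eta^{2}$, not the period $\oint dv$, and for the Lewis-type asymptotics this energy is finite and perfectly compatible with $J\neq0$. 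An integral identity alone does not yield the needed \emph{upper} bound $\alpha\leq 1$.

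What actually closes the argument is a pointwise mechanism. If $L/m<4$, the flux inequality plus a maximum-principle monotonicity argument forces $\underline{x}(\rho)=\min_{z}\ln(\eta/\rho)\to-\infty$, i.e.\ $\eta/\rho\to0$ along circles; then the constant period $8J=\oint_{\mathcal{C}_{i}}\nabla_{s}\omega\,d\ell$ is squeezed to zero by factoring it as $(\ell_{i}\rho_{i}/d_{i})\cdot(d_{i}\nabla_{s}\omega/\eta)\cdot(\eta(p'_{i})/\rho_{i})\cdot(\eta(p''_{i})/\eta(p'_{i}))$ and invoking three nontrivial inputs from \cite{Reiris1}: the gradient estimate $d\,|\nabla\omega|/\eta\leq c$ of (\ref{GE}), the oscillation control (\ref{MEME}) coming from sub-quadratic area growth, and the two-sided bound (\ref{DB}) on $\ell_{i}\rho_{i}/d_{i}$ along a suitable divergent sequence. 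None of these appears in your plan; without a bound of the type $|\nabla\omega|/\eta\lesssim 1/d$ on the circles together with control of their lengths, the monodromy cannot be compared with the energy density your identities manipulate. You also omit the step upgrading ``$\int x\,dz\to-\infty$'' to ``$\min_{z}x\to-\infty$'' (eventual monotonicity of $\underline{x}$ via the maximum principle applied to $\Delta_{\rho}x\leq0$). So the architecture is right, but the analytic core --- the estimates substituting for the unproven Kasner asymptotics --- is missing rather than merely technical.
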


Note that the metric completeness of $q$ is granted if $\eta$ (the norm of the axisymmetric Killing field) is bounded below away from zero at infinity and if the spatial metric $q/\eta$ is metrically complete, condition that we always assume as it is the physical metric. The MKN solutions have these properties.

To deduce from the Theorem \ref{MAIN1} that a MKN solution with $L/m<4$ cannot be put into stationary rotation, just suppose that there is a deformation family with $L_{\lambda}/m_{\lambda}$ varying continuously, with non-zero angular momentum for all $\lambda\neq 0$, and metric completeness at infinity of $e^{2\gamma}(dz^{2}+d\rho^{2})$ for all $\lambda$. Then, we would have $L_{\lambda}/m_{\lambda}<4$ for $\lambda$ sufficiently small, contradicting the theorem. Thus, MKN static black hole solutions cannot be put into stationary rotation if $L/m<4$. We state this in the following theorem.

\begin{Theorem}\label{MAIN2}
MKN black holes with $L/m<4$ cannot be put into stationary rotation, along a deformation family with $L/m$ varying continuously and with $q$ metrically complete at infinity for all $\lambda$.
\end{Theorem}

The meaning of the scale invariant and geometric invariant $L/m$ isn't particularly clear but it seems intuitively related to the distance $D$ along the axis between the two poles of the horizon. The following proposition sheds light on this point. We let $A$ be the area of the MKN black holes.
\begin{Proposition}\label{P1} For any MKN black hole we have,
\be\label{RATIOB}
\frac{D^{2}}{A}\geq \frac{1}{130}(\frac{L}{2m}-1).
\ee
\end{Proposition}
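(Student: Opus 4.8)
The goal is to bound a geometric quantity $D^2/A$ below by the invariant $L/(2m)-1 = \alpha^{-1}\cdot\frac{2m}{L}\cdot(\ldots)$... wait, let me think about the actual structure.

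We want $D^2/A \geq \frac{1}{130}(L/2m - 1)$. Both $D$ and $A$ are computed from the MKN metric. The plan is to get explicit-enough formulas for both $D$ (the geodesic distance along the axis between the two poles of the horizon) and $A$ (the horizon area) in terms of the Weyl data $\omega_{\rm MKN}$, $k_{\rm MKN}$, and then estimate.

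First, I would write down both quantities as integrals. The horizon sits on the segment $\Gamma_0 = \{(z,0): -m \le z \le m\}$, and near it Weyl coordinates degenerate. The area is $A = \int_{\text{horizon}} e^{-\omega+2k}\,(\text{appropriate volume form})$, which after using the known behaviour of $\omega_{\rm MKN}$ and $k_{\rm MKN}$ on $\Gamma_0$ reduces to an integral over $z\in[-m,m]$ of an explicit function; by the Schwarzschild-like structure of $\omega$ near the rod this is comparable to $m^2$ times a bounded factor depending on the ratio $\alpha=4m/L$. For $D$: the axis between the two poles is the segment $\{(z,0): m \le z \le L-m\}$ (using the quotient where $z$ ranges over $\Sigma_L$), and its length in the metric $\mathfrak{g}$ is $D = \int_m^{L-m} e^{(-\omega+2k)/2}\big|_{\rho=0}\,dz$. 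Both $\omega_{\rm MKN}$ and $k_{\rm MKN}$ restricted to the axis between horizons are smooth bounded functions, expressible via the convergent series defining the MKN solution.

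The main step is then the estimate itself. I expect the cleanest route is: (i) bound $D$ below by bounding $(-\omega+2k)/2$ below on the inter-horizon axis segment — here one uses that $\omega_{\rm MKN}\le 0$ there (it is a sum of Schwarzschild-type terms each $\le 0$, plus the counter-term) and that $k_{\rm MKN}$ is controlled, giving $D \gtrsim (L-2m) = L(1-\alpha/2)$ up to a universal constant, hence $D \gtrsim m(L/2m-1)$ up to a universal constant; (ii) bound $A$ above by a universal constant times $m^2$, again from the explicit near-rod asymptotics. Dividing, $D^2/A \gtrsim (L/2m-1)^2$, which is even stronger than claimed when $L/2m$ is large — so the real content is the regime $L/2m$ close to $1$, i.e. $\alpha$ close to $2$, where one wants a linear (not quadratic) lower bound and the constant $1/130$ must be extracted honestly.

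The hard part will be making these bounds genuinely \emph{uniform in the parameter} $\alpha = 4m/L \in (0,2)$, because both the counter-term in the definition of $\omega_{\rm MKN}$ and the integration constant fixing $k_{\rm MKN}$ (chosen to kill the struts) depend on $\alpha$, and their behaviour degenerates as $\alpha \to 2$ (horizons nearly touching) and as $\alpha \to 0$ (horizons far apart, asymptotically a single Schwarzschild rod on a long circle). I would handle $\alpha\to 2$ by a careful local analysis near the "neck" between consecutive horizons — there the geometry is nearly that of two Schwarzschild horizons about to merge, and one can compare with the known Schwarzschild quantities to see that $D/m$ and $\sqrt{A}/m$ both stay bounded away from $0$ and $\infty$ in a way that produces the factor $(L/2m-1)$ linearly. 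For $\alpha\to 0$ the estimate is comfortable since $D\to\infty$ while $A$ stays comparable to $m^2$. Assembling the two regimes and a compact middle range, and tracking constants through the series estimates for $\omega_{\rm MKN}, k_{\rm MKN}$, yields the explicit $1/130$; the exact numerical constant is not sharp and only needs to be concrete enough to state.
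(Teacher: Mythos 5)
Your overall plan -- express $D$ and $A$ as integrals of the Weyl data on the axis and on the rod and then estimate the ratio -- is the right shape, but the quantitative picture you build on is wrong precisely in the only regime that matters, $2m/L\to 1$. The paper's proof rests on the explicit Korotkin--Nicolai closed form for $e^{\omega}$ on the inter-horizon axis in terms of Gamma functions, $e^{\omega}=e^{4\gamma m/L}\,\Gamma(\tfrac{z+m}{L})\Gamma(1-\tfrac{z-m}{L})/\bigl(\Gamma(\tfrac{z-m}{L})\Gamma(1-\tfrac{z+m}{L})\bigr)$. From it, $A$ is obtained \emph{exactly} via the Smarr relation $A|\nabla N|_{H}=4\pi m$ (not by integrating over the rod), giving $A=8\pi mL\,\Gamma(1-2m/L)e^{-4\gamma m/L}/\Gamma(2m/L)$. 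Because $\Gamma(1-2m/L)\sim (1-2m/L)^{-1}$, the area satisfies $A/m^{2}\to\infty$ as the horizons approach each other; your step (ii), ``bound $A$ above by a universal constant times $m^{2}$,'' is therefore false exactly where you need it, and your later claim that $\sqrt{A}/m$ stays bounded near the neck is false for the same reason.

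Symmetrically, your lower bound for $D$ misses the essential mechanism. On the axis $e^{-\omega/2}$ is not merely $\geq 1$: it diverges like $\bigl((z-m)(1-\tfrac{z+m}{L})\bigr)^{-1/2}$ at the poles, and the resulting integral $\int_{m}^{L/2}\bigl(\tfrac{z-m}{L}(1-\tfrac{z+m}{L})\bigr)^{-1/2}dz$ is of order $L$ \emph{uniformly}, even as $L-2m\to 0$. Hence $D^{2}\gtrsim L^{2}/\Gamma(2m/L)$ does not degenerate when the horizons nearly touch, contrary to your estimate $D\gtrsim L-2m$. The linear factor $(\tfrac{L}{2m}-1)$ in the final inequality comes from $1/\Gamma(1-2m/L)$ in the ratio $D^{2}/A$ -- that is, from the divergence of $A$ -- and not from $D$ shrinking linearly. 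So your proposal would at best yield $D^{2}/A\gtrsim(\tfrac{L}{2m}-1)^{2}$ away from the endpoints, and the proposed repair by a ``neck analysis'' is premised on the wrong asymptotics of both $D$ and $A$. To close the gap you need the explicit axis formula (or an equivalently sharp description of $\omega$ near the poles and of the surface gravity), which is the actual content of the paper's argument.
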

We will be proving this proposition, but let's note from this that if $D/\sqrt{A}<1/12$ then $L/m<4$. Therefore, if the distance between the two poles of the horizon is too small compared to the square root of its area, then the MKN black hole cannot be put into stationary rotation. We state this in the following theorem.

\begin{Theorem}\label{MAIN3} 
MKN black holes with $D/\sqrt{A}<1/12$ cannot be put into stationary rotation.
\end{Theorem}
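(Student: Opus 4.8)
\noindent\emph{Proof strategy.} The plan is to obtain Theorem \ref{MAIN3} as an immediate corollary of Proposition \ref{P1} together with Theorem \ref{MAIN2}; no new geometric input is required, only the elementary arithmetic that converts the hypothesis $D/\sqrt{A}<1/12$ into the hypothesis $L/m<4$ of Theorem \ref{MAIN2}. First I would square the hypothesis to get $D^{2}/A<1/144$, and then invoke Proposition \ref{P1} to write
\be
\frac{1}{130}\Big(\frac{L}{2m}-1\Big)\le \frac{D^{2}}{A}<\frac{1}{144},
\ee
so that $L/(2m)-1<130/144<1$, hence $L/(2m)<2$, i.e. $L/m<4$. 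At this point Theorem \ref{MAIN2} applies verbatim and the MKN black hole in question cannot be put into stationary rotation. It is worth recording that the round value $1/12$ is not sharp: the argument works for any $D/\sqrt{A}<1/\sqrt{130}\approx 0.0877$, and $1/12\approx 0.0833$ sits comfortably below it.

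For completeness I would also recall, one level down, why Theorem \ref{MAIN2} holds: it is deduced from Theorem \ref{MAIN1} by the continuity argument given in the paragraph preceding its statement. If a MKN solution with $L/m<4$ could be put into stationary rotation, then for $\lambda$ small the deformed metrics $\mathfrak{g}_{\lambda}$ would be periodic stationary axisymmetric black holes with non-zero angular momentum, metrically complete spatial slices, and, by continuity of $\lambda\mapsto L_{\lambda}/m_{\lambda}$, still satisfying $L_{\lambda}/m_{\lambda}<4$ — which contradicts Theorem \ref{MAIN1}. So the whole chain closes once Proposition \ref{P1} and Theorem \ref{MAIN1} are established.

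Consequently there is no genuine obstacle in Theorem \ref{MAIN3} itself; the difficulty lies entirely in its two ingredients, which I expect to be the substantive parts of the paper. For Proposition \ref{P1}, the task is to bound the pole-to-pole axial distance $D$ from below and the horizon area $A$ from above (in the combined form stated) purely in terms of the scale invariant $L/m$, presumably by working with the explicit periodic potential $\omega_{\rm MKN}$ assembled from the translated Schwarzschild potentials $\omega_{S}(z+jL,\rho)$ and the quadratures (\ref{CONSTRE}) for $k_{\rm MKN}$; the delicate point will be making such estimates uniform over the entire admissible range $0<4m/L<2$ and keeping explicit track of the constant $130$. For Theorem \ref{MAIN1}, the main obstacle will be the global non-existence argument: one must exploit metric completeness of the $t=\text{const}$ slice, the solid-torus-minus-ball topology, and the structure of the Weyl--Papapetrou / harmonic-map reduction with prescribed singularities on the axis $\Gamma$, in order to force $L/m\ge 4$ whenever the angular momentum is non-zero.
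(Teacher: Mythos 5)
Your proposal is correct and follows exactly the paper's route: Theorem \ref{MAIN3} is obtained by squaring the hypothesis to get $D^{2}/A<1/144$, feeding it into Proposition \ref{P1} to conclude $L/(2m)-1<130/144<1$, hence $L/m<4$, and then invoking Theorem \ref{MAIN2}. Your arithmetic checks out (including the observation that any threshold below $1/\sqrt{130}$ would do), and your identification of where the real work lies — in Proposition \ref{P1} and Theorem \ref{MAIN1} — matches the paper's structure.
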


This inequality is by no means sharp. The proof of \eqref{RATIOB} in Proposition \ref{P1} can be refined to obtain,
\be 
D \geq \frac{\pi }{6 } \Gamma_{\min}  \left( 1 - \frac{2m}{L} \right) \sqrt{\frac{L}{2m}} \sqrt{A},
\ee
where $\Gamma_{\min}$ is the minimum of the gamma function in $\mathbb{R}^+$ \footnote{As a reference, $\frac{\pi \Gamma_{\min}}{6 \sqrt{2}} \approx 0.3279...$.}, and the bound in Theorem \ref{MAIN3} can be moved upwards.

\subsection{Physical interpretation of results}

There is an interesting heuristic argument for Theorem \ref{MAIN1} that we would like to elaborate in what follows. Assume that we have a stationary solution as in the hypothesis of Theorem \ref{MAIN1}. Take now its universal cover. As with the universal MKN solutions, that would be a configuration of infinite horizons periodically aligned on an axis but this time all the horizons rotate with the same angular momentum. Then, on physical grounds, it is expected the metric far away from the string of horizons to approach the gravitational field of a material, infinite-length, rotating cylindrical rod. Infinite rotating cylindrical rods were studied for the first time by Van Stockum in \cite{vanStockum:1937zz}, and outside the rod the metric is known to have the following closed form (Lewis, \cite{Lewis}),
\begin{gather}
\label{ASF1} V =  \frac{2a}{|w|} e^{-b} \rho^{1-a},\quad W =   \text{s}(w)e^{-b} \rho^{1-a},\\
\label{ASF2} \eta = \frac{|w|}{2a} (e^{b}\rho^{1+a}-e^{-b}\rho^{1-a}),\quad \frac{e^{2\gamma}}{\eta} = c\rho^{(a^2-1)/2},
\end{gather}
where $a$, $b$ and $c$ are constants, with $0<a<1$ and $b,c \in \mathbb{R}$, $w$ is related to the angular momentum per unit of length via $w=\frac{8J}{L}$ and the function $s(w)$ is the sign function. The lower bound on $a$ originates when solving stationary vacuum Einstein equations with cylindrical symmetry, since as $a \rightarrow 0$ the solutions degenerate to another family of possible models for $a=0$. A prior there is no upper bound for $a$, but any $a \geq 1$ cannot model the asymptotic of coaxial arrays of black holes like the ones we are considering. Indeed, $a\geq 1$ implies that the lapse function for the hypersurface $\{t = 0\}$ tends to zero at infinity, which can be ruled out by the maximum principle on the lapse equation taking into account that the lapse is zero on the horizons. We refer the reader to \cite{Peraza_2023} for a straightforward derivation and discussion of Lewis' models.

Now, for the solution we are considering (not its universal cover), we consider the Komar mass relative to the Killing field $\partial_{t}$. We evaluate it over the horizon and at `infinity' assuming the asymptotic form of the metric (\ref{ASF1})-(\ref{ASF2}) (take the limit calculated over the sequence of divergent 2-tori with constant $\rho$, as $\rho\rightarrow \infty$). Then, after a simple calculation, we obtain,
\be
(1-a)\frac{L}{4} = \frac{\kappa A}{4\pi} + 2\Omega J,
\ee
where $A$ is the area of the horizon, $\kappa$ the surface gravity, $\Omega$ the angular velocity and $J$ the Komar angular momentum. This is a Smarr-type of identity \footnote{For a detailed discussion on Smarr identity and Komar integrals, we refer the reader to Chapters 7 and 12 of Wald's textbook in General Relativity, \cite{wald84}}. The term on the left is the Komar mass at infinity and the term on the right is the usual expression in the standard Smarr identity for the Komar mass of the horizon. The angular velocity has always the same sign as the angular momentum, so the product $\Omega J$ is always positive (this term can be seen to be an explicitly positive integral). Furthermore it is easy to see that $\kappa A/4\pi = m$ (see for instance \cite{Peraza_2023}). Putting all together we deduce, 
\be\label{SI}
\frac{L}{4}>(1-a)\frac{L}{4}>m,
\ee
and therefore $L/m>4$. The only assumption in this argument is the asymptotic form of the metric coefficients in (\ref{ASF1})-(\ref{ASF2}). Proving that the asymptotic behaviour of the metric is indeed Lewis' can be an avenue to prove Theorem \ref{MAIN1}, and in fact this can be done after some considerable effort. However there is a shorter proof that does not necessitate of such a detailed analysis of the asymptotics. What our proof essentially does is to extract sufficient asymptotic information on the metric so that to make the argument above work. One final word about the argument. One may wonder if one would obtain also $L/m>4$ were one to calculate the Smarr identity for the MKN solutions. After all, the only difference between a solution as in the Theorem \ref{MAIN1} and a MKN solution is that one rotates and the other does not. However, when calculating the Smarr identity for the MKN solution, one must use the Kasner asymptotics (\ref{KASA}) rather than (\ref{ASF1})-(\ref{ASF2}), giving,
\be       
\alpha \frac{L}{4}=m.
\ee
But, we know that $\alpha$ can take any value between $0$ and $2$, so, when repeating the argument as in (\ref{SI}) we get, at best,
\be
 L/2>\alpha L/4=m.
 \ee
Therefore $L/m>2$ rather than $L/m>4$. We then see that although the asymptotic is Kasner when rotation is or isn't present, in the first case the Kasner exponent $1-a$ takes values in $(0,1)$, whereas in the second case the Kasner exponent $\alpha$ can take values in $(0,2)$. This peculiar discontinuity in the asymptotics is the fundamental fact behind Theorem \ref{MAIN1}.   
\vs

\subsection{Outlook, open problems and further work.}

There are many interesting avenues for future research. Below we mention three of them.

First, one may wonder whether MKN solutions with $L/m<4$ could be deformed into non-axisymmetric and non-static stationary solutions. While that could be a possibility, we believe that: (i) a stationary deformation should always be axisymmetric in the same way one expects that stationary asymptotically flat solutions with one horizon should be axisymmetric and therefore Kerr, and (ii) once having axisymmetry one should expect to have global Weyl-Papapetrou coordinates following standard arguments by Carter (see \cite{AST_2008__321__195_0} and references therein). We note that it is very uncertain whether the main result of this article could be generalized to higher dimensions, as in higher dimensions there could be many possible ``deformation directions" where on could put the static solution to rotate.\vs

Second, it is not known whether MKN solutions with $L/m$ sufficiently large could be put into stationary rotation. The numerical work of \cite{Peraza_2023} supports that possibility. We leave this question as an interesting open problem. 

Finally, a third interesting avenue of research would be to investigate axisymmetric rotating perturbations of MKN black holes with $L/m<4$ as an initial data evolution problem. Such perturbations, (that can actually be constructed), may not evolve into stationary solutions. A peculiar instability must be present but its nature is at the moment unclear.

\section{Proof of the main results}

\subsection{Proof of Theorem \ref{MAIN1}}

The coefficient $V$ in (\ref{MWPC}) relates to $W$ and $\eta$ by $V=(\rho^{2}-W^{2})/\eta$. On the other hand $W$ relates to $\eta$ and the so called twist potential $\omega$ by $W=\eta\Omega$ where $\Omega$ is the angular velocity function that is found from $\eta$ and $\omega$ by $d\Omega = - \rho (*d\omega)/\eta$ and where $*$ is the Hodge-star with respect to the flat metric $d\rho^{2}+dz^{2}$ \footnote{As a reference, $*d\rho - dz$ and $*dz  = d\rho$.}. Conversely, these equations can be used to define $\omega$ in terms of $W$ and $\eta$. Let $q$ the 2-metric $q = e^{2\gamma}(d\rho^{2}+dz^{2})$. Then, the data $(q;\eta, \omega)$ satisfies the closed elliptic system:
\begin{align}
\label{REE1} \Delta_{f} \eta & = \frac{|\nabla \eta|^{2}-|\nabla \omega|^{2}}{\eta},\\
\label{REE2} \Delta_{f} \omega & = 2\langle \nabla \omega, \frac{\nabla\eta}{\eta}\rangle,\\
\label{REE3} Ric - \nabla\nabla f & =  \frac{1}{2}\frac{\nabla \eta \nabla \eta + \nabla \omega \nabla \omega}{\eta^{2}} +\nabla f\nabla f,
\end{align}
where $f=\ln \rho$ and $\Delta_{f} \phi := \Delta \phi + \langle \nabla f, \nabla \phi\rangle$. All norms and inner products are with respect to $q$. These are essentially the reduced Einstein equations. In this article we will scarcely use them directly. Rather, we will mostly make use of some geometric information that has been obtained from them in \cite{Reiris1}. We discuss all that below.

The combination $Ric-\nabla\nabla f$ is the so called Bakry-\'Emery Ricci tensor, that it is non-negative by equation (\ref{REE3}). The first two equations are the $f$-harmonic map equations for the map $p\in S \rightarrow (\omega(p), \eta(p))\in \mathbb{H}^{2}$ (see \cite{Wei92}). Observe that the Gaussian curvature $\kappa$ of $q$,
\be
\kappa=\frac{|\nabla \omega|^{2}+|\nabla \eta|^{2}}{\eta^{2}},
\ee
is the energy of the harmonic map and is explicitly non-negative. The equations (\ref{REE1})-(\ref{REE3}) form a very structured system from which relevant geometric information can be deduced. For instance, after some manipulation (see \cite{Reiris1}) one can obtain $\Delta_{f} |\nabla \rho|^{2}/\rho^{2} \geq 2(|\nabla \rho|^{2}/\rho^{2})^{2}$ and $\Delta_{f} \kappa \geq 4\kappa^{2}$. In other words, either, $\psi = \kappa$ or $\psi = |\nabla \rho|^{2}/\rho^{2}$ satisfy an equation of the form $\Delta_{f}\psi\geq \psi^{2}$ for some constant $c>0$. Then, the straightforward Lemma 3.2 in \cite{Reiris:2015zaa} shows that whenever $Ric-\nabla\nabla f \geq 0$ and $\Delta_{f} \psi \geq c\psi^{2}$ then $\psi(p)\leq c'/{\rm dist}(p,\partial S)$ for some $c'>0$. Thus, the following estimates hold for $\eta,\omega$ and $\rho$.


%
\begin{Estimate}[\cite{Reiris1}, Proposition 9] \label{EST1} There is a constant $c$ independent on the data $(S;q,\eta,\omega)$ such that,
\be\label{GE}
\frac{|\nabla \omega|^{2}}{\eta^{2}}\leq \frac{c}{d^{2}},\quad \frac{|\nabla \eta|^{2}}{\eta^{2}}\leq \frac{c}{d^{2}},\quad
\frac{|\nabla \rho|^{2}}{\rho^{2}}\leq \frac{c}{d^{2}},
\ee
where the norms are with respect to $q$, and $d(p)={\rm dist}_{q}(p,\partial S)$. 
\end{Estimate}    
For the proof of Theorem \ref{MAIN1}, we are going to need a second estimate proved in \cite{Reiris1}, showing that the reverse inequality for the gradient estimate of $\rho$ is possible at least over a diverging sequence of points.
\begin{Estimate}[\cite{Reiris1}, Proposition 11] \label{EST2} There is a constant $c'$ (that may depend on the data) and a divergent sequence of points $p_{i}$, such that,
\be
\frac{|\nabla \rho|^{2}}{\rho^{2}}(p_{i})\geq \frac{c'}{d^{2}(p_{i})}.
\ee
\end{Estimate}    

There a number of geometric consequences of these estimates, proved in detail in \cite{Reiris1}, that we comment about in what follows. Estimate 1 shows that the Gaussian curvature $\kappa$ of $q$, has at most quadratic decay, namely, 
\be\label{CD}
\kappa\leq \frac{c}{d^{2}}.
\ee 
\noindent It was proved at the end of Section 3 in \cite{Reiris1} that once this curvature bound is available, there is sufficient control on the local geometry so that one can make a bootstrapping of elliptic estimates on (\ref{REE1})-(\ref{REE2}) to obtain higher order estimates for $\eta$, $\omega$ and $\rho$. The estimates are as follows,
\be\label{HGE}
\frac{|\nabla^{j} \omega|^{2}}{\eta^{2}}\leq \frac{c_{j}}{d^{2j}},\quad \frac{|\nabla^{j} \eta|^{2}}{\eta^{2}}\leq \frac{c_{j}}{d^{2j}},\quad
\frac{|\nabla^{j} \rho|^{2}}{\rho^{2}}\leq \frac{c_{j}}{d^{2j}},\quad \forall j\geq 1.
\ee
\noindent We won't need them except for $|\nabla\nabla \rho|^{2}/\rho^{2}\leq c/d^{4}$ that will be recalled a few lines below.

On the other hand, the decay (\ref{CD}) and the non-negativity of $\kappa$ implies, via the standard Bishop-Gromov volume comparison, that $(S;q)$ has either quadratic or sub-quadratic area growth, and that if the growth is quadratic then the manifold is asymptotically conical. It was proved in Proposition 12 in \cite{Reiris1}, that the presence of the positive harmonic function $\rho$ rules out a conical asymptotics, so $(S;q)$ has in fact sub-quadratic area growth.

Let now $p_{i}$ be any divergent sequence of points. Let $\rho_{i}=\rho(p_{i})$, let $\mathcal{C}_{i}$ be the embedded circle $\mathcal{C}_{i}=\{\rho=\rho_{i}\}$, $\ell_{i}={\rm length}_{q}(\mathcal{C}_{i})$ its $q$-length and $d_{i}={\rm dist}_{q}(p_{i},\partial S)$. Then, the sub-quadratic volume growth implies,
\be\label{ELLSD}
\lim_{i\rightarrow \infty} \frac{ \ell_{i}}{d_{i}} = 0.
\ee
This and the second estimate of (\ref{GE}) in turn imply,
\be\label{MEME}
\lim_{\rho_{i}\rightarrow \infty} \left(\frac{\max\{\eta(p):p\in \mathcal{C}_{i}\}}{\min\{\eta(p):p\in \mathcal{C}_{i}\}}\right) = 1.
\ee
To see this, note that if we let $s$ be the arc-length on $\mathcal{C}_{i}$, (starting from some point), then (\ref{GE}) implies $|(\ln \eta(s))'|\leq c/d(s)$. Integrating between two points $p$ and $p'$ in $\mathcal{C}_{i}$ we deduce $\ln \eta(p)/\eta(p') \leq c\int_{s}^{s'} 1/d(s)  ds \leq c\ell_{i}/\underline{d}(s)$ where $\underline{d}(s)=\min\{d(p):p\in \mathcal{C}_{i}\}$. By the triangle inequality, for any point $p \in \mathcal{C}_{i}$, we get $d_i \leq \underline{d}(s) + \ell_i $; then we have $\underline{d}(s) \geq d_{i}-\ell_{i}$. Hence, as the points $p$ and $p'$ are arbitrary, we get $1\leq \max\{\eta(p):p\in \mathcal{C}_{i}\}/\min\{\eta(p):p\in \mathcal{C}_{i}\}\leq e^{c\ell_{i}/(d_{i}-\ell_{i})}$, where the r.h.s tends to 1 by (\ref{ELLSD}).     

By definition of the sets $\mathcal{C}_{i}$, we trivially also have
\be
\lim_{\rho_{i}\rightarrow \infty} \left(\frac{\max\{\rho(p):p\in \mathcal{C}_{i}\}}{\min\{\rho(p):p\in \mathcal{C}_{i}\}}\right) = 1.
\ee

Let us specialize now to the sequence $p_{i}$ of Estimate \ref{EST2}. It was proved in \cite{Reiris1} that there exist $0<c_{1}<c_{2}$ such that 
\be\label{DB}
0< c_{1}\leq \frac{\ell_{i}\rho_{i}}{d_{i}}\leq c_{2},\quad \forall i.
\ee
(see inside the proof of Theorem 1). Intuitively, these inequalities can be seen as a confirmation of the usual heuristic argument of taking Weyl-Papapetrou coordinates as a chart over the whole manifold. Indeed, \eqref{DB} states that the lengths of the scaled transverse circles (in asymptotically flat space usually $\approx O(1)$) divided by the distance to the boundary (in asymptotically flat space usually $\approx O(\rho)$) scales as $\frac{1}{\rho}$, and therefore $\rho$ can be taken as a coordinate in the same spirit as in the asymptotically flat case.

For the sake of completeness, the argument to prove this bound is as follows. Integrate $\Delta \rho=0$ between $\mathcal{C}_{0}$ and $\mathcal{C}_{i}$ obtaining,
\be
\int_{\mathcal{C}_{i}} \nabla_{n}\rho\, d\ell = c,
\ee
where $c$ is independent on $i$, where $n$ is the outwards normal to $\mathcal{C}_{i}$. Because $n=\nabla \rho/|\nabla \rho|$ we deduce, $\int_{\mathcal{C}_{i}} |\nabla \rho|d\ell = c$. The, because $\rho$ is equal to $\rho_{i}$ on $\mathcal{C}_{i}$ we can write $\rho_{i}\int_{\mathcal{C}_{i}} |\nabla \rho|/\rho d\ell = c$. Using the mean value theorem we get,
\be
\rho_{i}\ell_{i}\frac{|\nabla \rho|}{\rho}\bigg|_{\bar{p}_{i}} = c,
\ee
where $\bar{p}_{i}$ is some point in $\mathcal{C}_{i}$. Multiply and divide now by $d_{i}$ to get,
\be\label{THIS}
\frac{\rho_{i}\ell_{i}}{d_{i}}\left(d_{i} \frac{|\nabla \rho|}{\rho}\bigg|_{\bar{p}_{i}}\right) = c,
\ee
Now, because $p_{i}$ is the sequence of Estimate 2 we have $d_{i}|\nabla\rho|/\rho|_{p_{i}}\geq c'$. Using the bound $|\nabla\nabla \rho|/\rho\leq c_{3}/d^{2}$, we will deduce next that $d_{i}|\nabla\rho|/\rho|_{\bar{p}_{i}}\geq c'/2$ for $i$ sufficiently large. To see this, note that,
\begin{equation}
\left|\nabla_{s}\frac{|\nabla\rho|^{2}}{\rho^{2}}\right| = 2 \left|\frac{\langle \nabla_{s}\nabla \rho,\nabla \rho\rangle}{\rho^{2}}\right|\leq \frac{2c_{3}}{d^{2}}\frac{|\nabla \rho|}{\rho} \leq \frac{2c_{3}c}{d^{3}}.
\end{equation}
where, to be precise, $\nabla_{s}=\nabla_{t}$ for the $t$ the tangent unit field to $\mathcal{C}_{i}\leq \frac{c''}{\underline{d}^{3}_{i}}$. Integrating between $s(\bar{p}_{i})$ and $s(p_{i})$ we get,
\be
\frac{|\nabla \rho|^{2}}{\rho^{2}}(p_{i}) - \frac{|\nabla \rho|^{2}}{\rho^{2}}(\bar{p}_{i}) \leq \left| \int_{s(\bar{p}_{i})}^{s(p_{i}} \frac{c''}{\underline{d}_{i}^{3}} d\ell\right| \leq \frac{c''\ell_{i}}{\underline{d}_{i}^{3}}\rightarrow 0.
\ee 
where the convergence to $0$ of the last term follows from \ref{ELLSD}. Thus, for $i$ sufficiently large we obtain,
\be
\frac{|\nabla \rho|^{2}}{\rho^{2}}(p_{i}) \leq 2 \frac{|\nabla \rho|^{2}}{\rho^{2}}(\bar{p}_{i}) 
\ee
as wished.


The second is direct from (\ref{THIS}) and the bound (\ref{GE}), using a uniform bound by integrating on each circle $\mathcal{C}_{i}$ the bound $d_{i}|\nabla\rho|/\rho|_{p_{i}}\leq c'$ and taking the limit $i \rightarrow 0$.
\vs

We are now ready to prove the Theorem \ref{MAIN1}. From the discussion above we will only need (\ref{GE}), (\ref{MEME}) and (\ref{DB}).

\begin{proof}[Proof of Theorem \ref{MAIN1}] We proceed by contradiction, so we assume that there is a solution with $L/m<4$. For that solution define,
\be
x:=\ln \left(\frac{\eta}{\rho}\right).
\ee
We will use this new function of $(\rho,z)$ in what follows instead of $\eta$, and later we will also use,
\be
\underline{x}(\rho):=\min_{z}x(\rho,z).
\ee
Observe that using the function $x$, $V =  \left(  \rho - \frac{1}{\rho} W^2 \right) e^{-x}$. This particular choice for $x$ will enable us to obtain a useful monotonic quantity in \eqref{int_after_bdy_cond} below.

Using (\ref{REE1}), the equation for $x$ becomes,
\be\label{xEQ}
\Delta_{\ln \rho}x=-\frac{|\nabla \omega|^{2}e^{-2x}}{\rho^{2}}.
\ee 
Integrating this equation on a bounded region between $\mathcal{C}_{\rho_{0}}$ and $\mathcal{C}_{\rho}$, for some $\rho_0 > 0$, and using the divergence theorem gives,
\be
\int_{\mathcal{C}_{\rho_{0}}}\rho_0 \nabla_{n} x\, d\ell \leq \int_{\mathcal{C}_{\rho}}\rho\nabla_{n} x\, d\ell,
\ee
where $n$ is the outwards normal to the $\mathcal{C}_{\rho}$'s. Observing that $\nabla_{n}x\, d\ell = \partial_{\rho}x\, dz$ we write this inequality as,
\be
\rho_{0}\int_{-L/2}^{L/2} \partial_{\rho}x\, dz \leq \int_{-L/2}^{L/2}\rho\partial_{\rho}x\, dz = \frac{\partial}{\partial \ln \rho}\left(\int_{-L/2}^{L/2} x(\rho,z)\, dz\right).
\ee
Now we use that the solution is a black hole. This imposes particular boundary conditions on $x(\rho,z)$ as $\rho\rightarrow 0$ (see Section 5 of \cite{Wei92} for a detailed discussion, and section 2.4, equation (24) in \cite{Peraza_2023} for the particular boundary conditions on $\sigma \equiv x - \ln \rho$). Taking that into account we deduce,
\be \label{int_after_bdy_cond}
0>L-4m \geq \frac{\partial}{\partial \ln \rho}\left(\int_{-L/2}^{L/2} x(\rho,z)\, dz\right).
\ee
Integrating this between a fixed $\ln \rho_{1}$ and $\ln \rho$, ($\rho_{1}<\rho$), we arrive at,
\be
\underline{x}(\rho)\leq \frac{1}{L}\int_{-L/2}^{L/2}x(\rho,z)\, dz \leq (\frac{4m}{L}-1)\ln\left(\frac{\rho}{\rho_{1}}\right) + \frac{1}{L}\int_{-L/2}^{L/2}x(\rho_1,z) \, dz.
\ee
Therefore,
\be
\underline{x}(\rho)\rightarrow -\infty,\quad {\rm as}\quad \rho\rightarrow \infty.
\ee

What we will prove in what follows is that $\underline{x}(\rho)$ is indeed bounded below, which gives us the desired contradiction. 

To achieve this we begin showing that $\underline{x}(\rho)$ is either monotonically increasing or it is monotonically decreasing after some $\rho_{0}$. To prove this we show first a basic claim: for no $\rho_{1}<\rho_{2} < \rho_{3}$ we can have $\underline{x}(\rho_{1})>\underline{x}(\rho_{2})$ and $\underline{x}(\rho_{2})<\underline{x}(\rho_{3})$. Assume that such $\rho_{1},\rho_{2}$ and $\rho_{3}$ exist. Then $x$ must have a local minimum in the open region between $\mathcal{C}_{\rho_{1}}$ and $\mathcal{C}_{\rho_{3}}$. However $x$ is not constant and $\Delta_{\rho} x\leq 0$, and we contradict the maximum principle. Assume now that $\underline{x}$ is not monotonically increasing. Then there are $\rho_{1}<\rho_{2}$ such that $\underline{x}(\rho_{1})>\underline{x}(\rho_{2})$. Now, if $\rho_{3}>\rho_{2}$ then, because of the claim just proved, we must have $\underline{x}(\rho_{2})>\underline{x}(\rho_{3})$ and, again, if $\rho_{4}>\rho_{3}$ it must be $\underline{x}(\rho_{3})>\underline{x}(\rho_{4})$. Thus, for any $\rho_{4}>\rho_{3}>\rho_{2}=:\rho_{0}$ we must have $\underline{x}(\rho_{3})>\underline{x}(\rho_{4})$, proving that it is monotonically decreasing after $\rho_{0}$. 

We move now to prove that $\underline{x}$ is bounded below. By what was proved above, if $\underline{x}$ is not bounded below then it must be monotonically decreasing after some $\rho_{0}$ so it must be $\underline{x}(\rho_{i})\rightarrow -\infty$ for any sequence $\rho_{i}\rightarrow \infty$, ($\rho_{i}\geq \rho_{0}$). Given a sequence $\rho_{i}\rightarrow \infty$ let $p'_{i}\in \mathcal{C}_{\rho_{i}}$ be such that $x(p'_{i})=\underline{x}(\rho_{i})$. Then, for this sequence, it is,
\be\label{CRU}
\frac{\eta(p'_{i})}{\rho_{i}}=e^{x(p'_{i})}=e^{\underline{x}(\rho_{i})}\rightarrow 0.
\ee
We will use this information later.

Now, the Komar angular momentum can be calculated on any $\mathcal{C}_{\rho}$ as,
\be
J=\frac{1}{8}\int_{\mathcal{C}_{\rho}}\nabla_{s}\omega\, d\ell.
\ee
where $s$, in $\nabla_{s}\omega$ is a unit tangent vector to $\mathcal{C}_{\rho}$ in the direction of increasing $z$. Therefore, for any sequence $\rho_{i}\rightarrow \infty$ we can use the mean value theorem to have,
\be\label{DTO}
0\neq 8J=\ell_{i}\nabla_{s}\omega(p''_{i}),
\ee
for some $p''_{i}\in \mathcal{C}_{\rho_{i}}$ and where, recall, $\ell_{i}={\rm length}_{q}(\mathcal{C}_{\rho_{i}})$.
 
We are going to use the information we have collected so far to reach the desired contradiction. We specialize now to the divergent sequence $\rho_{i}=\rho(p_{i})$, where $p_{i}$ is the sequence of Estimate \ref{EST2}. We write (\ref{DTO}) as,
\be
0\neq 8J=\left(\frac{\ell_{i}\rho_{i}}{d''_{i}}\right)\, \left(\frac{d''_{i}\nabla_{s}\omega(p''_{i})}{\eta(p''_{i})}\right)\, \left(\frac{\eta(p'_{i})}{\rho_{i}}\right)\, \left(\frac{\eta(p''_{i})}{\eta(p'_{i})}\right),
\ee
where $d''_{i}=d(p''_{i})$ and where as earlier $p'_{i}\in \mathcal{C}_{\rho_{i}}$ is such that $\underline{x}(\rho_{i})=x(p'_{i})$. From (\ref{DB}), (\ref{MEME}) and (\ref{GE}),  we know that $(\ell_{i}\rho_{i})/d_{i}$, $\eta(p''_{i})/\eta(p'_{i})$ and $d_{i}\nabla_{s}\omega(p''_{i})/\eta(p''_{i})$ are bounded. As $\eta(p'_{i})/\rho_{i}$ tends to zero by (\ref{CRU}), we conclude that the whole expression must tend to zero, reaching thus a contradiction.
\end{proof}

\subsection{Proof of Theorem \ref{MAIN3}}

We move now to prove the Proposition \ref{P1}. Theorem \ref{MAIN3} follows as a corollary.

\begin{proof}[Proof of Proposition \ref{P1}] 
We first express $A$ in terms of $L$ and $m$. To do that it is more convenient to work with the lapse function $N =e^{\omega/2}$. We first recall that $|\nabla N|$ is constant along the horizon (equal to the surface gravity) and therefore, integrating the lapse equation $\Delta N = 0$ over the hypersurface relying on the Kasner asymptotics (\ref{KASA}) at infinity, we get, 
\be\label{AE}
A|\nabla N|_{H}=4\pi m.
\ee
This is the usual $M = \frac{\kappa A }{4\pi}$ identity for a static black hole, where $\kappa$ is the surface gravity, but written in the context of a periodic solution.

Using the explicit form (\ref{MKNM}), we can calculate $|\nabla N|$ at the pole $(z,\rho)=(m,0)$, as, 
\be
|\nabla N|_{H} = |\nabla N|(m,0) = \frac{1}{2}\frac{{\rm d}e^{\omega}}{{\rm d}z}\bigg|_{(m,0)}.
\ee
To compute it, we use the following quite useful expression for $e^{\omega}$ over the axis $\rho=0, z\in (m,L/2)$ found by Korotkin and Nicolai in \cite{KN},
\be\label{wA}
e^{\omega}=e^{4\gamma m/L}\frac{\Gamma((z+m)/L)\Gamma(1-(z-m)/L)}{\Gamma((z-m)/L)\Gamma(1-(z+m)/L)},
\ee
where here $\Gamma(x)$ is the Gamma function, and $\gamma$ is the Euler-Mascheroni constant. This surprising close formula arise from the properties of infinite product expansions for $\Gamma$ functions when constructing the universal cover of the MKN solutions as explained below equation \eqref{omega_Sch} \footnote{In particular, Weierstrass' identity involves both $\Gamma$ functions and the Euler-Mascheroni constant,
\begin{equation*}
\Gamma(z) = \frac{e^{-\gamma z}}{z} \prod_{n=1}^{\infty} \frac{n}{n + z} e^{z/n}. 
\end{equation*}
}.

We note that this expression is 0 at $z=m$ because $\Gamma((z-m)/L)$ diverges to $+\infty$ when $z\rightarrow m$. To make this more clear we transform this term using $\Gamma(1+x)=x\Gamma(x)$, valid for any $x>0$, to obtain $\Gamma((z-m)/L)=((z-m)/L)\Gamma(1+(z-m)/L)$. Replacing this expression in (\ref{wA}) and differentiating at $z=m$, we obtain,
\be
|\nabla N|_{H}=\frac{1}{2L}\frac{\Gamma(2m/L)}{\Gamma(1-2m/L)}e^{4\gamma m/L}.
\ee
Plugging this expression in (\ref{AE}) we deduce,
\be\label{AFORMULA}
A=8\pi mL \frac{\Gamma(1-2m/L)}{\Gamma(2m/L)}e^{-4\gamma m/L}.
\ee   

We will obtain now a lower bound for $D$ that we will combine with (\ref{AFORMULA}) to deduce (\ref{RATIOB}). The integral for $D$ is,
\be
D=\int_{m}^{L/2}e^{-\omega/2}dz,
\ee
where $\omega$ is evaluated on $\rho=0$, $m\leq z\leq L/2$. This integral can be bounded from below using several properties of the $\Gamma$ function that allow us to compute bounds for expressions arbitrarily close to the pole at $z=m$. First, rewriting the integrand via the use of $\Gamma(1+x) = x\Gamma(x)$ two times, we get,
\be
e^{-\omega/2}=e^{-2\gamma m/L}\bigg[\frac{\Gamma(1+\frac{z-m}{L})\Gamma(2-\frac{z+m}{L})}{\Gamma(\frac{z+m}{L})\Gamma(1-\frac{z-m}{L})}\bigg]^{1/2}\frac{1}{\sqrt{\frac{(z-m)}{L}(1-\frac{z+m}{L})}}.
\ee
To bound the middle term in square brackets we note that: (i) $1\geq (z+m)/L\geq 2m/L$, hence $\Gamma((z+m)/L)\leq \Gamma(2m/L)$, (ii) $1/2\leq 1-(z-m)/L\leq 1$, hence $\Gamma(1-(z-m)/L)\leq \Gamma(1/2)$, (iii) $\Gamma(x)\geq 3/4$, $\forall x>0$, therefore $\Gamma(1+\frac{z-m}{L})\Gamma(2-\frac{z+m}{L})\geq (3/4)^{2}$. Putting it all together, we get,
\be
D\geq e^{-2\gamma m/L} \frac{3^{2}}{4^{2}} \frac{1}{\Gamma^{1/2}(2m/L)\Gamma^{1/2}(1/2)}\int_{m}^{L/2}\frac{1}{\sqrt{\frac{(x-m)}{L}(1-\frac{x+m}{L})}} dx.
\ee
The integral is easily computed to be $\pi L/4$, and $\Gamma(1/2)\leq 2$ so we finally get the bound,
\be
D^{2}\geq e^{-4\gamma m/L} \frac{3^{4}\pi^{2}}{4^{6}}\frac{L^{2}}{2}\frac{1}{\Gamma(2m/L)}.
\ee
Combining it with (\ref{AFORMULA}) and using that $\Gamma(1-2m/L)=(\Gamma(2-2m/L))/(1-2m/L)\leq 1/(1-2m/L)$, we finally get,
\be
\frac{D^{2}}{A}\geq \frac{\pi 3^{4}}{4^{7}2}(\frac{L}{2m}-1)\geq \frac{1}{130}(\frac{L}{2m}-1).
\ee
\end{proof}

Take any MKN black hole solution with $D/\sqrt{A} < 1/12$, and assume that it can be deformed into a non-zero angular momentum stationary solution. Then, $L > 4m$ and therefore $\left( \frac{L}{2m} - 1 \right) > 1$. We now use Proposition \ref{P1} directly,
\begin{equation*}
\frac{D^2}{A} \geq \frac{1}{130}\left( \frac{L}{2m} - 1 \right) > \frac{1}{130} > \frac{1}{12^2} > \frac{D^2}{A},
\end{equation*}
which shows the contradiction and proves Theorem \ref{MAIN3}.

\vs

\noindent {\bf Acknowledgements}. The authors would like to thank Dmitry Korotkin and Hermann Nicolai for very helpful discussions. This work was part of J.P. PhD project, funded by CAP PhD scholarship and partially funded by Fondo Clemente Estable Project FCE\_1\_2023\_1\_175902 and by CSIC grant C013-347.

\providecommand{\noopsort}[1]{}\providecommand{\singleletter}[1]{#1}%

\end{document}